\newcommand{\ignore}[1]{}
\newcommand{\nop}[1]{}
\newcommand{\eat}[1]{}
\newtheorem{definition}{Definition}
\newtheorem{lemma}{Lemma}
\newtheorem{example}{Example}
\newcommand{\squishlisttight}{
 \begin{list}{$\bullet$}
  { \setlength{\itemsep}{0pt}
    \setlength{\parsep}{0pt}
    \setlength{\topsep}{0pt}
    \setlength{\partopsep}{0pt}
    \setlength{\leftmargin}{2em}
    \setlength{\labelwidth}{1.5em}
    \setlength{\labelsep}{0.5em}
} }
\newcommand{\squishnumlist} {
\newcounter{qcounter}
\begin{list}{\arabic{qcounter}.~}{\usecounter{qcounter}} 
{  \setlength{\itemsep}{0pt}
    \setlength{\parsep}{0pt}
    \setlength{\topsep}{0pt}
    \setlength{\partopsep}{0pt}
    \setlength{\leftmargin}{2em}
    \setlength{\labelwidth}{1.5em}
    \setlength{\labelsep}{0.5em}
}}
\newcommand{\squishend}{
  \end{list}
}
\mathchardef\mhyphen="2D
\newcommand{\kw}[1]{{\ensuremath {\mathsf{#1}}}\xspace}
\newcommand{\kwnospace}[1]{{\ensuremath {\mathsf{#1}}}}
\newcommand{\stitle}[1]{\vspace{1ex} \noindent{\bf #1}}
\newcommand{\dkng}{$(k,d)$-neighborhood\xspace}
\DeclareMathOperator{\truss}{\tau}
\DeclareMathOperator{\newTruss}{\hat{\tau}}
\DeclareMathOperator{\low}{k_{low}}
\DeclareMathOperator{\up}{k_{up}}
\newcommand{\xin}[1]{{\color{black}{#1}}}
\newcommand{\soroush}[1]{{\color{black}{#1}}}
\newcommand{\baseline}{\kwnospace{Truss}-\kw{Decompostion}}
\newcommand{\vpp}{\kwnospace{Vertex}-\kw{PP}}
\newcommand{\epp}{\kwnospace{Edge}-\kw{PP}}
\newcommand{\hybridpp}{\kwnospace{Hybrid}-\kw{PP}}
\title{Fast Algorithm for K-Truss Discovery on Public-Private Graphs}
\author{
Soroush Ebadian$^{*,\dag}$
\And
Xin Huang$^{\dag}$
\affiliations
$^*$Sharif University of Technology\\ 
$^{\dag}$Hong Kong Baptist University
\emails
soroushebadian@gmail.com,
xinhuang@comp.hkbu.edu.hk
}
\begin{document}

\maketitle
\begin{abstract}
In public-private graphs, users share one public graph and have their own private graphs. A private graph \soroush{consists} of personal private contacts that only can be visible to its owner, e.g., hidden friend lists on Facebook and secret following on Sina Weibo. However, 
existing public-private analytic algorithms \soroush{have} not yet investigated the dense subgraph discovery of $k$-truss, where each edge is contained in at least $k - 2$ triangles. 
This paper aims at finding $k$-truss efficiently in public-private graphs. 
The core of our solution is a novel algorithm to update $k$-truss with node insertions. We develop a classification-based hybrid strategy of \xin{node insertions and edge insertions} 
to incrementally compute $k$-truss in public-private graphs. Extensive experiments validate the superiority of our proposed algorithms against state-of-the-art methods on real-world datasets.
\end{abstract}

\section{Introduction}\label{sec.intro}
Online social networks (e.g., Facebook, Twitter, Instagram, and Sina Weibo) have become vital platforms for connecting users to share information, post daily life events, and spread influence \cite{kempe2003maximizing,DBLP:conf/aaai/WilderIRT18,zhang2017finding,ijcai2018-507}. Due to \eat{the}\soroush{privacy} concerns, users tend to hide their connections, leading such private relationships not visible to other users in public but only themselves. For instance, Facebook users are likely to conceal their friend-list \cite{dey2012facebook}; Weibo users may prefer using the secret following feature, which hides their interested followees. Public-private graphs are developed to model this kind of social networks \cite{chierichetti2015efficient}. A public-private network contains a public graph which is visible and accessible to everyone; in addition, each vertex has a personal private graph only visible to its owner. Therefore, in the view of each user, the social network is a union of the public graph and its own private graph which can be significantly different for distinct users. Recently, many graph analytic tasks have been investigated on public-private networks, such as all-pairs shortest path distances, node similarities, and correlation clustering.

Dense subgraph discovery is a fundamental problem of many network analysis tasks. Numerous definitions of dense subgraphs have been proposed and investigated, e.g., clique, quasi-clique \cite{PeiJZ05}, $n$-clan \cite{mokken1979}, $n$-club \cite{mokken1979}, and $k$-plex \cite{xiao2017fast}. Recently, a popular notion of dense subgraphs that has been studied is $k$-truss. A $k$-truss is the largest subgraph of a graph such that each edge is contained in at least $k-2$ triangles within this subgraph. Finding $k$-trusses has many useful applications such as community search \cite{jiang2018vizcs}, complex network visualization \cite{zhao2012large}, 
 and task-driven team formation \cite{huang2016truss}. To the best of our knowledge, finding $k$-truss over public-private networks has not yet been studied in the literature. In this paper, we formulate the problem of finding public-private $k$-truss as follows. Given a query vertex \soroush{and parameter $k$}, the problem is to find $k$-truss in the public-private graph owned by this query vertex.

\begin{figure}[t]
    \centering 
    \vspace{-0.4cm}
    \includegraphics[scale=0.65]{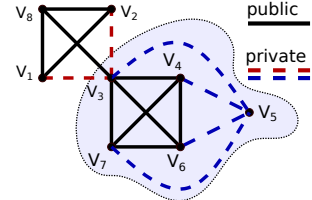}
    \vspace{-0.3cm}
    \caption{An example of public-private graph. Black solid edges are public. \xin{Blue dashed edges are private to $v_5$, and red dashed edges are private to $v_3$. The gray area is a $5$-truss in personalized pp-graph of $v_5$ as $g_{v_5}$.}
    }
    
    \label{fig:pp-formulation-example}
    \vspace{-0.2cm}
\end{figure}

Efficient extraction of public-private $k$-truss raises significant challenges. A straightforward approach is to ignore users' private edges, which can lead to inaccurate results. Another approach is to apply 
\soroush{truss decomposition on the public-private graph of the query vertex to extract the $k$-truss with the given parameter $k$.} However, this method computes $k$-truss from scratch, which is particularly inefficient for large-scale networks. To tackle these challenges, we develop an index-based computational paradigm to efficiently update
\soroush{a truss index}
on a public graph using the edges in a private graph, with a minimal amount of recomputation on the public graph. 

To summarize, we make the following contributions:

\squishlisttight
\item We formulate a new problem of finding $k$-truss over public-private graphs, that is finding $k$-truss in public-private graph owned by a given query vertex (Section~\ref{sec.problemdef}).

\item We analyze the structural properties of $k$-truss on public-private networks. Based on the observations, we develop $k$-truss updating algorithm using a hybrid strategy of node/edge insertions/deletions (Section~\ref{sec.fast}).

\item 
We validate the efficiency of our proposed methods through extensive experiments on real-world datasets of public-private networks (Section~\ref{sec.exp}). 

\end{list} 


\section{Related Work}\label{sec.related}

\stitle{Public-private graph processing.} Several essential problems of graph analysis on public-private graphs have been studied in \cite{chierichetti2015efficient,archer2017indexing}, such as the size of reachability tree \cite{cohen2007summarizing}, all-pairs shortest paths \cite{das2010sketch}, pairwise node similarities \cite{haveliwala2002topic}, correlation clustering \cite{bansal2004correlation}.
Moreover, the public-private model of data summarization has been investigated and solved by a fast distributed algorithm \cite{mirzasoleiman2016fast}. 

\stitle{K-truss mining.} Recently, several studies on $k$-truss mining \soroush{have} been investigated \cite{cohen2008,zhang2018finding}. Equivalent concepts of $k$-truss termed as different names include
triangle $k$-core \cite{YZhang12}, $k$-dense community \cite{saito2008extracting,gregori2011k}, and $k$-mutual-friend subgraph \cite{zhao2012large}. Truss decomposition is to find the non-empty $k$-truss for all possible $k$ values in a graph. Algorithms of truss decomposition have also been studied in 
different types of graphs (e.g., directed graphs \cite{takaguchi2016cycle}, uncertain graphs \cite{zou2017truss}, and dynamic graphs \cite{YZhang12,huang2014querying}).  

In contrast to the above studies, finding $k$-truss over public-private networks is studied for the first time in this paper.

\section{Preliminary}
\label{sec.problemdef}

We consider a simple and undirected graph $G=(V, E)$ where $V$ and $E$ are the vertex set and  edge set respectively.
We define $N(v)=\{u\in V: (v,u)\in E\}$ as the set of neighbors of a vertex $v$, and $d(v)=|N(v)|$ as the degree of $v$ in $G$. 
For a set of vertices $S\subseteq V$, the induced subgraph of $G$ by $S$ is denoted by $G[S]$, where the vertex set is $S$ and the edge set is $E(G[S]) = \{(v,u)\in E: v, u\in S\}$. 

\eat{Table \ref{tab:notations} holds frequently used notations in the paper.

\begin{table}[t]
	\scriptsize
	\centering
	\begin{tabular}{ll}  
		\toprule
		Notation & Description\\
		
		\midrule
		$G = (V(G), E(G))$ & An undirected simple graph $G$\\ 
		$N(v)$ & Set of neighbor vertices to vertex $v$\\
		$E(v)$ & Set of edges incident to vertex $v$\\
		$E_v$ & Set of private edges incident to vertex $v$\\
		$\sup_H(v)$ & \textit{Support} of edge $e$ in $H$\\
		$\truss(e), \truss(v)$ & Trussness of edge $e$, vertex $v$\\
		$\newTruss(e), \newTruss(v)$ & Trussness of edge $e$ and vertex $v$ after an update\\
		$G_S$ & Induced subgraph of graph $G$ over vertex set $S$\\
		$G^{k, d}_v$ & $(k, d)$-Neighborhood of vertex $v$ in $G$\\
		\bottomrule
	\end{tabular}
	\caption{Frequently used notations}
	\label{tab:notations}
\end{table}
}

\subsection{Public-Private Graphs}  
We first introduce a model of public-private graph  $\mathcal{G}$ \cite{chierichetti2015efficient}. A public-private graph $\mathcal{G}$ \soroush{consists} of one public graph and multiple private graphs. Given a public graph $G=(V,E)$, the vertex set $V$ represents users, and the edge set $E$ represents connections between users. For each vertex $u$ in the public graph $G$, $u$ has an associated private graph $G_u=(V_u, E_u)$, where $V_u \subseteq V$ are the users from public graph and the edge set $E_u$ satisfies $E_u \cap E=\emptyset$. The public graph $G$ is visible to everyone, and the private graph $G_u$ is only visible to user $u$. Thus, in the view of user $u$, she/he can see and access the structure of graph that is the union of public graph $G$ and its own private graph $G_u$, i.e., $G\cup G_u = (V, E\cup E_u)$ \cite{huang2018pp}. 
The personalized public-private graph (a.k.a. pp-graph in short) owned by a vertex $u$ is defined as follows.

\begin{definition}
[Personalized PP-Graph]
\label{def.ego}
Given a public-private graph $\mathcal{G}$ and a vertex $u$, the personalized pp-graph of $u$ is denoted by $g_u$, where $g_u = G\cup G_u = (V, E\cup E_u)$. Here, $E_u$ are the private edges only visible to $u$, and $E\cap E_u = \emptyset$.
\end{definition}

\subsection{K-Truss}
A triangle  is a cycle of length 3 in graphs. Given three vertices $u, v, w \in V$, the triangle formed by $u, v, w$ is denoted by $\triangle_{uvw}$. The support of an edge is defined as follows. 

\begin{definition}
[Support] Given a subgraph $H\subseteq G$, the support of an edge $e=(u,v)$, denoted by $\sup_H(e)$, is defined as the number of triangles containing edge $e$ in $H$, i.e.,  $\sup_H(e)=|\{\triangle_{uvw} : (u,v), (u, w), (v, w) \in E(H)\}|$. 
\label{def.support}
\end{definition}

We drop the subscript and denote the support as $\sup(e)$, when the context is obvious. Based on the support, we give a definition of $k$-truss \cite{WangC12} as follows.

\begin{definition}
[K-Truss] A $k$-truss $H$ of graph $G$ is defined as the largest subgraph of $G$ such that every edge $e$ has support of at least $k-2$ in this subgraph, i.e., $\sup_H(e)\geq k-2$. 
\label{def.ktruss}
\end{definition}

\subsection{Problem Statement}
The problem of public-private $k$-truss discovery studied in this paper is formulated as follows.

\stitle{Problem formulation}: Given a public-private graph $\mathcal{G}$, a vertex $u \in V$, and an integer $ k \geq 2$, the problem is to find the $k$-truss in the personalized pp-graph $g_{u}$ where $g_{u} = G\cup G_u$. 

\begin{example}
Consider \soroush{the}\eat{a} public-private graph $\mathcal{G}$ in Figure \ref{fig:pp-formulation-example}, a query vertex $v_5$, and $k = 5$. Black  edges are public. Blue edges are private to $v_5$. The answer of $5$-truss in personalized pp-graph $g_{v_5}$  is the subgraph depicted in the gray region. 
\end{example}

\eat{
\subsection{Questions}
\xin{
Q1. Consider the following two problems.\\
a). Finding $k$-truss in the public-private graph $G\cup G_v$. The private graph $G_v$ is not fixed, meaning $G_v$ may be a star, a clique, or other kinds of graph structure.   \\
b). Finding $k$-truss in the public-private ego-network $g_v$ in the above problem formulation?\\
Whether our current techniques can address above both problems?

Q2. What's technical novelty of our methods? Could you please write down the key points that are useful in the development of algorithms?

Q3. Two top-tier international conferences of KDD and IJCAI are approaching. If we want to catch any one of them, we may need to act together quickly. Which one is your preference?\\
KDD Deadline: February 3, 2019. \url{https://www.kdd.org/kdd2019}\\
IJCAI Deadline: February 25, 2019. \url{https://ijcai19.org/}\\
}

\subsection{Framwork of This Paper}

\xin{
\stitle{\#1. Problem.} 

Given a public graph $G$, a query node $q \in V(G)$, and an integer $k\geq 2$, the problem is to find $k$-truss in the personalized graph of $q$, a union of public graph $G$ with its private graph $g_q$,  i.e.,  $G\cup g_q$. 

\stitle{\#2. Algorithms.}

Algo 1. The first algorithm is a naive method to apply truss decomposition on $G\cup g_q$.

Algo 2. The second algorithm is an edge-insertion algorithm to updating edge trussness of $G\cup g_q$ by inserting each edge of $g_q$ into $G$ one by one. 

Algo 3. The third algorithm is a hybrid algorithm (node-insertion + edge insertion) to updating edge trussness of $G\cup g_q$ by inserting a node/an edge of $g_q$ into $G$ one by one. 

\stitle{\#3. Theoretical Contributions.}

1. Provide the detailed proof of  Rule 1-3 for edge insertion in SIGMOD'14 paper.

2. Define and prove new rules for node insertion, similar with Rule 1-3 and Lemma 2-3 in SIGMOD'14 paper.

3. Design a balance strategy for Algo 3, deciding when to use node-insertion and edge-insertion respectively. For instance, Case 1) one node $u$ with 10000 public edges and 1 private edges, and Case 2) one node $u$ with 0 public edges and 10000 private edges.  Note that, no node-deletion/edge-deletion are considered here.

}
}


\section{Proposed Algorithms}\label{sec.fast}

This section introduces our algorithms for finding $k$-truss in personalized pp-graph $g_u$, w.r.t. a query vertex $u$. We first give an overview of our ideas in Section \ref{sec.idea}, and then present a well thorough description of technical details afterward. 

\subsection{Overview of Algorithmic Framework}
\label{sec.idea}
We consider two different ideas. 

\stitle{Solution 1: online search algorithm.} One intuitive approach is to apply truss decomposition on pp-graph $g_u$ to iteratively remove edges with less than $k-2$ triangles and output the remaining graph as answers. However, such computing $k$-truss from scratch on $g_u$ for each query vertex $u$ is obviously inefficient for big graphs with a large number of vertices.

\stitle{Solution 2: index-based search algorithm.} Recall that personalized pp-graph  $g_u=G\cup G_u$ has a public graph $G$  and a private graph $G_u$ only available to $u$. The public graph $G$ is available to everyone, and the structure of $G$ is identical to each query vertex $u$. The idea of index-based search algorithms is to construct a structural index of public graph $G$ offline, and then online find $k$-truss based on the precomputed index of $G$ and additional graph $G_u$. In the following, we introduce a concept of trussness, which is useful for constructing the truss-index for $k$-truss discovery.

\begin{definition}
[Trussness] Given a subgraph $H\subseteq G$, the trussness of $H$ denoted by $\tau(H)$ is defined as the minimum support of edges in $H$ plus 2, i.e.,  $\tau(H) =\min_{e\in E(H)}$ $\{\sup_H(e)+2\}$. The trussness of an edge $e\in H$ denoted by $\tau_H(e)$ is defined as the largest number $k$ such that there exists a connected $k$-truss $H'$ containing $e$, i.e., $$\tau_H(e) = \max_{H'\subseteq H, e\in E(H')} \tau(H').$$
\label{def.edgetruss}
\vspace{-0.3cm}
\end{definition}

We drop the subscript and denote $\tau_G(e)$ as $\tau(e)$ when the context is obvious. According to Def.~\ref{def.edgetruss}, $k$-truss of $G$ is the union of all edges $e$ with $\tau(e)=k$. The truss-index of public graph $G$ keeps the trussness of all edges in $G$. Given a truss-index of $G$, the remaining issue is \emph{how to update the truss-index for pp-graph $G\cup G_u$, w.r.t. the additional $G_u$.}

\stitle{Updating truss-index using edge insertions.} A simple approach is to add edges of $G_u$ one-by-one into $G$ and update the truss-index accordingly, by using an existing edge-insertion algorithm \cite{huang2014querying}. However, when the number of private edges $E_u$ is large, the adaptation of edge-insertion may be inefficient. For example, consider the example graph $\mathcal{G}$ in Figure~\ref{fig:pp-formulation-example} and query vertex $v_5$ with 4 private edges. It invokes the edge-insertion algorithm for 4 times.

\stitle{Our approach.} 
To address the above issue, we propose a batch-update algorithm using node-insertion. The idea is to simultaneously insert a new node $u$ with all its incident edges into graph $G$ at the same time, and call node-insertion algorithm only \soroush{once}\eat{call}. In \soroush{the} above example, it needs only one node-insertion of \soroush{the} isolated node $v_5$ and all of its private edges.  To handle the truss-index update with node insertions efficiently, the key is to identify the affected region in the graph precisely. We provide a theoretical analysis to define the affect\soroush{ed} scope in Section \ref{sec.theoretical}, and the detailed algorithm of node-insertion in Section~\ref{sec.insertionalg}.
\soroush{However, when $u$ has both public and private edges, we can first remove $u$ with its public edges, and re-insert it with both public and private edges incident to $u$ using node-insertion. This method has significant advantages outperforming edge-insertion when private edges of $u$ are much larger than public edges of $u$. On other cases, edge-insertion method may perform better.}
Therefore, we construct a classifier to determine which algorithms of node-insertion and edge-insertion should be applied. This classification-based hybrid approach is developed to 
fast 
find $k$-truss in pp-graph $g_u$, which is presented in Section~\ref{sec.ppquerying}.

\subsection{Theoretical Analysis}
\label{sec.theoretical}
In this section, we present useful rules for truss-index updating with node insertion/deletion. Consider a vertex $v$ and the set of edges incident to $v$ as $E(v)$. 
In the case of node insertion, we insert a new vertex $v$ and its incident edges $E(v)$ into $G$, where $E(v)\cap E(G) = \emptyset$; in the case of node deletion, we delete vertex $v$ and all its incident edges $E(v)$ from $G$, where $E(v)\subseteq E(G)$. We use $\truss(e)$ and $\newTruss(e)$ to denote trussness of edge $e$ before and after updating operation. Motivated by 
\cite{huang2014querying}, the following three updating rules hold.

\stitle{Rule 1:}
If new node $v$ is inserted into graph $G$ with $\newTruss(v) = \max_{e \in E(v)} \newTruss(e) = l$, then  $\forall e \in E(G)$ with $\truss(e) \ge l$, $\newTruss(e) = \truss(e)$ holds.

\stitle{Rule 2:}
If node $v$ is deleted from graph $G$ with $\truss(v) =  \max_{e \in E(v)} \truss(e) = l$, then  $\forall e \in E(G) \setminus E(v)$ with $\truss(e) > l$, $\newTruss(e) = \truss(e)$ holds.

\stitle{Rule 3:}
$\forall e \in E(G)\setminus E(v)$, $|\newTruss(e) - \truss(e)| \le 1$ holds. 

Rules 1 and 2 hold because node $v$ is not present at any $(l + 1)$-truss subgraph. Rule 3 holds because for each edge, at most one triangle will be formed/deformed after one node insertion/deletion, hence $\sup(e)$ will change at most by one.


In the following, we focus on node insertions. In order to apply Rule 1 for pruning, the value of $\newTruss(v)$ is required. However, an exact computation of $\newTruss(v)$ is costly expensive. Instead, we develop another rule based on an upper bound of $\newTruss(v)$ below.

\stitle{Rule 1':}
If node $v$ is inserted into graph $G$ and $\overline{\newTruss(v)} \ge \newTruss(v)$, then  $\forall e \in E(G)$ with $\truss(e) \ge \overline{\newTruss(v)}$, $\newTruss(e) = \truss(e)$ holds.

To desire an upper bound of $\newTruss(e)$, we need a new definition of \dkng as follows.

\begin{definition}[\dkng]
Given a graph $G$, \dkng of vertex $v$, denoted by $G^{k, d}_v$, is the maximal subgraph $H \subseteq G[N(v)]$ holding
\label{def.dkneigh}
\begin{enumerate}
	\item $\truss_G(e) \ge k, \forall e \in E(H)$ and
	\item $d_H(u) \ge d, \forall u \in V(H)$.
\end{enumerate}
\end{definition}

\begin{lemma}
\label{lem.bound}
Consider a new node $v$ and its incident edges $E(v)=\{(v,w): w\in N(v)\}$ are inserted into graph $G$. For each new edge $e=(v, w)$ in the new graph $G_{new}$, the trussness of $e$, $\newTruss(e)$, satisfies $\low(e) \le \newTruss(e) \le \up(e)$ where 
$$\low(e) = \max \{k: w \in G^{k, k - 2}_v \}$$ 
and 
$$\up(e) = \max \{k:w \in G^{k-1,k-2}_v\}.$$ 
Moreover, $|\up(e) - \low(e)| \le 1$ holds.
\end{lemma}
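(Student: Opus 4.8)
The plan is to establish three things: the lower bound $\newTruss(e) \ge \low(e)$, the upper bound $\newTruss(e) \le \up(e)$, and the gap bound $|\up(e) - \low(e)| \le 1$. Throughout, fix a new edge $e = (v,w)$ and work in the new graph $G_{new} = G \cup \{v\} \cup E(v)$.

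\textbf{Step 1: Lower bound.} Suppose $k = \low(e)$, so by definition $w \in G^{k,k-2}_v$. I want to exhibit a connected $k$-truss in $G_{new}$ that contains $e$. The natural candidate is the subgraph $H^*$ obtained by taking the edges of $G^{k,k-2}_v$ (which lie inside $G[N(v)]$), adding the star edges $\{(v,x) : x \in V(G^{k,k-2}_v)\}$, and then also throwing in any $(k)$-truss of $G$ that each edge of $G^{k,k-2}_v$ belongs to, glued together. The key observations: (i) every original edge $(x,y)$ of $G^{k,k-2}_v$ has $\truss_G(\cdot) \ge k$, so it sat in a connected $k$-truss of $G$ already, giving it $k-2$ triangles among old edges; (ii) every star edge $(v,x)$ with $x \in V(G^{k,k-2}_v)$ gains one triangle $\triangle_{vxy}$ for each neighbor $y$ of $x$ within $G^{k,k-2}_v$, and since $d_{G^{k,k-2}_v}(x) \ge k-2$ there are at least $k-2$ such triangles; (iii) each original edge $(x,y)$ of $G^{k,k-2}_v$ additionally picks up the triangle $\triangle_{vxy}$. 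One must check that the union of these pieces still has every edge with support $\ge k-2$ (the $k$-truss of $G$ attached to an edge of $G^{k,k-2}_v$ certifies that edge's support using old triangles, independent of $v$), and that it is connected through $v$. This yields $\newTruss(e) \ge k$.

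\textbf{Step 2: Upper bound.} Let $k = \newTruss(e) + 1$; I must show $w \notin G^{k-1,k-2}_v$, i.e. $w \notin G^{\newTruss(e),\, \newTruss(e)-1}_v$. Write $t = \newTruss(e)$. Since $e = (v,w)$ has trussness exactly $t$, there is a connected $t$-truss $H$ in $G_{new}$ containing $e$; inside $H$, edge $(v,w)$ has support $\ge t-2$, and since $v$ only contributes triangles through its neighbors, at least $t-2$ vertices $x \in N_H(v) \cap N_H(w)$ exist. Now consider $H' = H[N(v)]$ restricted to old edges: I claim this witnesses $w$ lying in a subgraph of $G[N(v)]$ where every edge has $\truss_G \ge t$ — but wait, that gives $\truss_G(\cdot) \ge t$, not $\ge t-1$, so I actually need the converse direction carefully. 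The cleaner route: assume for contradiction $w \in G^{t, t-1}_v$. Then run the Step 1 construction with parameters $(t, t-1)$ in place of $(k, k-2)$: the star edge $(v,w)$ now gets $\ge t-1$ triangles and old edges keep trussness $\ge t$, producing a connected $(t+1)$-truss containing $e$, contradicting $\newTruss(e) = t$. Hence $w \notin G^{t,t-1}_v = G^{(t+1)-1,(t+1)-2}_v$, so $\up(e) \le t = \newTruss(e)$. Combined with Step 1's $\low(e) \le \newTruss(e)$ this is not yet the full claim — I also need $\newTruss(e) \le \up(e)$, which follows symmetrically: take $k = \newTruss(e)$, and from the connected $t$-truss $H$ containing $e$ extract $H[N(v)]$ on old edges; every such old edge has $\truss_G \ge t-1$ (by Rule 3, since it had support $\ge t-2$ in $H$ but $H$ minus the triangles through $v$ may drop each by one) and degree $\ge t-2$ within this neighbor-subgraph, so $w \in G^{t-1,t-2}_v$, giving $\up(e) \ge t$.

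\textbf{Step 3: Gap bound.} By definition $G^{k,k-2}_v \subseteq G^{k-1,k-2}_v$ (relaxing the trussness requirement from $k$ to $k-1$ only enlarges the admissible subgraph, keeping the degree constraint fixed), so $\low(e) \le \up(e)$ is immediate. For the other direction, it suffices to show $w \in G^{k-1,k-2}_v$ implies $w \in G^{k-1,k-3}_v$ — trivially true — which by the Step-1/Step-2 correspondence forces $\low(e) \ge \up(e) - 1$; more directly, one argues that if $w \in G^{k,k-2}_v$ fails but $w \in G^{k-1,k-2}_v$ holds, then any further relaxation to $k-2$ still contains $w$, so $\low(e) \ge k-2 = \up(e)-1$. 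Packaging Steps 1–2 gives $\low(e) \le \newTruss(e) \le \up(e)$, and Step 3 gives $\up(e) - \low(e) \le 1$.

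\textbf{Main obstacle.} The delicate point is Step 2's bookkeeping of triangle counts when passing between $H$ in $G_{new}$ and its trace on old edges: I must be careful that restricting a connected $t$-truss $H$ to $G[N(v)]$ and deleting the apex $v$ does not destroy too much support — each old edge in $H[N(v)]$ loses at most the single triangle $\triangle_{v\cdot\cdot}$, which is exactly where the ``$-1$'' in the degree/trussness parameters of $\up(e)$ and the gap bound comes from. Making the two directions ($\newTruss(e) \le \up(e)$ and $\up(e) \le \newTruss(e)$, similarly for $\low$) line up so that the final sandwich and the unit gap both drop out cleanly is the crux; everything else is the straightforward construct-a-truss / extract-a-truss argument already familiar from the edge-insertion setting of \cite{huang2014querying}.
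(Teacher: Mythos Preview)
Your overall architecture matches the paper's: construct a $k_l$-truss containing $e$ for the lower bound, delete $v$ from a putative large truss for the upper bound, and compare the two neighborhood families for the gap. Steps 1 and the \emph{second half} of Step 2 are essentially the paper's argument. Two concrete problems remain.

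\textbf{Step 2, first half, is both aimed at the wrong inequality and broken.} Setting $k=\newTruss(e)+1$ and showing $w\notin G^{t,t-1}_v$ would yield $\up(e)\le \newTruss(e)$, which is the reverse of what the lemma asserts; you notice this yourself and then start over. Worse, the construction you propose there does not produce a $(t+1)$-truss: when you glue in a $t$-truss $T$ of $G$ certifying an edge of $G^{t,t-1}_v$, every edge of $T$ that has at least one endpoint \emph{outside} $N(v)$ gains no triangle through $v$, so its support stays at $t-2$, not $t-1$. Hence the glued object is only a $t$-truss, and no contradiction arises. Drop this detour entirely; your second argument (take a $t$-truss $H\ni e$ in $G_{new}$, look at $H[N(v)]$, use Rule~3 to get $\tau_G\ge t-1$ on old edges and the support of $(v,x)$ in $H$ to get degree $\ge t-2$, conclude $w\in G^{t-1,t-2}_v$) is exactly what the paper does, just stated directly rather than by contradiction.

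\textbf{Step 3 has an off-by-one.} You want $\low(e)\ge \up(e)-1$. With $k=\up(e)$ you have $w\in G^{k-1,k-2}_v$; since relaxing the degree threshold gives $G^{k-1,k-2}_v\subseteq G^{k-1,k-3}_v$, you get $w\in G^{(k-1),(k-1)-2}_v$ and hence $\low(e)\ge k-1=\up(e)-1$, not the ``$\low(e)\ge k-2$'' you wrote. The paper makes this substitution explicitly: $\up(e)=\max\{k:w\in G^{k-1,k-2}_v\}\le \max\{k:w\in G^{k-1,k-3}_v\}=\low(e)+1$.
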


\begin{proof}
We consider an edge $e^*=(v, w^*)$ in $G_{new}$. For simplification, we denote by $\low(e^*) =k_l$ and $\up(e^*) =k_u$.

First, we prove $\newTruss(e^*) \geq \low(e^*) =k_l$. To prove it, we show that there exists a $k_l$-truss $H^*$ of $G_{new}$ containing $e^*$. By the definition of $\low(e)$, there exists a $k_l$-truss  $H$ of $G$, i.e., $\forall e \in E(H)$, $\sup_H(e)\geq k_l-2$.  
Let $H^* = (V(H)\cup \{v\}, E(H) \cup \{(w, v) \mid w \in G^{k_l, k_l - 2}_v\})$, 
which adds vertex $v$ and \eat{the}$v$'s incident edges $(w,v)$ with $\low((w, v)) \geq k_l$ into $H$. 
For each edge $(w, v)\in E(H^*)\setminus E(H)$, $w$ and $v$ have at least $k_l-2$  common neighbors  in $H^*$ by the second condition of Def.~\ref{def.dkneigh}, indicating $\sup_{H^*}((w,v))\geq k_l-2$; moreover, for each edge 
$\soroush{e}\eat{(u, w)} \in E(H^*)\cap E(H)$
, $\sup_{H^*}(e)\geq \sup_H(e)\geq k_l-2$. As a result, $H^*$ is a $k_l$-truss, and  $\newTruss(e^*) \geq \tau_{H^*}(e^*) \geq k_l$. 

Second, we prove  $\newTruss(e^*) \leq \up(e^*) =k_u$ by contradiction. Assume that $\newTruss(e^*) \geq k_u+1$, there exists a $(k_u+1)$-truss $H$ containing $e^*$ in $G_{new}$. We delete the node $v$ and all its incident edges $(v, w)$ from $H$, which leads to a new graph $H^*$.  By Rule 3, the trussness of each edge $e$ in $H^*$ decreases by at most 1 after the node deletion of $v$, i.e., $\tau_{H^*}(e) \geq k_u$. 
Let the vertex set $S= V(H) \cap N(v)$.  Obviously, $H^*[S]$ = $H[S]$. 
For each edge $e$ in $H^*[S]$, $\tau_{G}(e) \geq \tau_{H^*}(e) \geq k_u$; for each vertex $w$ in $H[S]$, 
the edge $(v,w)$ belongs to $(k_u+1)$-truss $H$, 
indicating $w$ has at least $k_u-1$ neighbors in $H$ and also in $H[S]$, i.e.,  $d_{H^*[S]}(w) = d_{H[S]}(w)$ $\geq k_u-1$. By Def.~\ref{def.dkneigh}, $H^*[S]$ is a $(k_u, k_u-1)$-neighborhood as $G_v^{k_u, k_u-1}$ in $G$. Thus, $\max \{k:w \in G^{k-1,k-2}_v\} \geq k_u+1$.
However, by the definition of $\up(e)$, $k_u = \max \{k:w \in G^{k-1,k-2}_v\}$ and  $k_u\geq k_u+1$, which is a contradiction.

Third, we prove   $|k_u - k_l|\leq 1$. 
Obviously, $k_l \leq k_u$ and $k_l \leq k_u+1$. We next prove $k_u \leq k_l+1$. According to the definition of $\up(e)$, we have $k_u =$ $\max \{k:w \in G^{k-1,k-2}_v\}$ $\leq$ $\max \{k:w \in G^{k-1,k-3}_v\}$. Moreover, $\max \{k:w \in G^{k-1,k-3}_v\} =$ $\max \{k:w \in G^{k,k-2}_v\}+1$ $= k_l+1$ by the definition of $\low(e)$, and we derive $k_u\leq k_l+1$. As a result, $|k_u - k_l|\leq 1$.
\end{proof}



\eat{
Let $\overline{\newTruss(v)} = \max\{ \up(e): e \in E(v)\}$ be an upper bound of $\newTruss(v)$, where $ \up(e(u, v)) = \max\{k:u \in G^{k - 1, k -2}_v\}$. If node $v$ is inserted to graph $G$, and by assigning $\truss((v, u)) =  \max \{k: u \in G^{k-2,k}_v \}$, edges $e \in G \cup G_v$ with $\truss(e) < \overline{\newTruss(v)}$ may have $\newTruss(e) =  \truss(e) + 1$ only in the following two cases:
\begin{enumerate}
\item
Edge $(u, w) \in E(v)$ and $\truss((u, w)) < \min \{\up((v, u)), \up((v, w))\}$.
\item
There exists a vertex $t$ for edge $(u, w)$, where $(t,u,w)$ form a triangle, and $\truss((u, w)) = \min\{\truss((u, t)), \truss((w, t))\}$.
\end{enumerate} The scope of affected edges after insertion or deletion of a node can be formulated as follows.
}




\stitle{Scope of Affected Edges.}
Let $\overline{\newTruss(v)} = \max\{ \up(e): e \in E(v)\}$ be an upper bound of $\newTruss(v)$, and the weight of a triangle be the minimum trussness of edges within this triangle.

\begin{enumerate}
\item \stitle{Node Insertion.} Edge $e = (x, y) \in E(G) \cup E(v)$ with $\tau(e) < \overline{\newTruss(v)}$, may have trussness increment if $(v, x, y)$ form a triangle of weight $\tau(e)$, or  $e$ is connected to  $v$ through a series number of adjacent triangles each with weight of $\tau(e)$.

\item \stitle{Node Deletion.} Edge $e = (x, y) \in E(G) - E(v)$ with $\tau(e) \le \max\{\tau(e): e \in E(v)\}$ may have trussness decrement if $(v, x, y)$ form a triangle of weight $\tau(e)$ or $e$ is connected to $v$ through a series number of adjacent triangles each with weight of $\tau(e)$.
\end{enumerate}

\begin{algorithm}[t]
\small
\vspace{-0.05cm}
\caption{Node-Insertion Updating Algorithm}
\label{algo:nodeinsertion}
\begin{flushleft}
\vspace*{-0.1cm}
\textbf{Input:} $G=(V, E)$, new node $v$, edge set $E(v)$\\
\textbf{Output:} $\newTruss(e)$ for each $e \in E \cup E(v)$\\
\end{flushleft}
\begin{algorithmic}[1]
\vspace*{-0.1cm}
\STATE $G \leftarrow G \cup (v, E(v))$

\STATE Compute $\low(e), \up(e)$ for all $e \in E(v)$ by Algorithm \ref{algo:nodebound}

\STATE \textbf{for} $e$ in $E(v)$ \textbf{do}

\STATE \hspace{0.3cm} $\truss(e) \leftarrow \low(e)$

\STATE \hspace{0.3cm} \textbf{if} $\low(e) < \up(e)$ \textbf{then}

\STATE \hspace{0.6cm} $L_{\low(e)} \leftarrow L_{\low(e)} \cup \{e\}$

\STATE \textbf{for} $e = (u, w)$ in $G_{N(v)}$ \textbf{do}

\soroush{
\STATE \hspace{0.3cm} \textbf{if} $\truss(e) < \min\{\up((u, v)), \up((w, v))\}$ \textbf{then}
	
\STATE \hspace{0.3cm} \hspace{0.3cm} $L_{\truss(e)} \leftarrow L_{\truss(e)} \cup \{e\}$
}

%
%

\STATE $k_{max} \leftarrow \max\{\up(e) : e \in E(v)\}$ \textbf{then}

\STATE \textbf{for} $k \leftarrow k_{max} - 1$ to $2$  \textbf{do}

\STATE \hspace{0.3cm} UpdateTrussness($k$, $L_k$)

\end{algorithmic}
\end{algorithm}

\subsection{Node-Insertion Updating Algorithm}
\label{sec.insertionalg}

In this section, we propose a algorithm to update \soroush{the truss-index with node insertions.}

\begin{algorithm}[t]
\small
\vspace*{-0.05cm}
\caption{Node-Insertion Bound Computing Algorithm}
\label{algo:nodebound}
\begin{flushleft} 
\vspace*{-0.1cm}
\textbf{Input:} $G=(V, E)$, new node $v$, edge set $E(v)$, $type \in \{$low, up$\}$\\
\textbf{Output:} $\{\operatorname{k_{type}}(e) \colon e \in E(v)\}$ trussness bound according to $type$
\end{flushleft}
\begin{algorithmic}[1]
\vspace*{-0.1cm}
\STATE $H \leftarrow G[N(v)]$; $k \leftarrow 2$

\STATE \textbf{while} $H \neq \emptyset$  \textbf{do}

\STATE \hspace{0.3cm} \textbf{while} $\exists e\in E(H)$ with $\tau_H(e) < k$ \textbf{do}

\STATE \hspace{0.6cm} Delete edge $e$ from $H$;

\STATE \hspace{0.3cm} \textbf{while} $\exists d_H(u) < (k - 2 $ \textbf{if} $type =$ low \textbf{else} $k - 1)$ \textbf{do}

\STATE \hspace{0.6cm} Delete vertex $u$ and its incident edges from $H$;

\STATE \hspace{0.6cm} $\operatorname{k_{type}}((v, u)) \leftarrow (k - 1 $ \textbf{if} $ type = $ low \textbf{else} $ k)$

\STATE \hspace{0.3cm} $k \leftarrow k + 1$
\end{algorithmic}

\end{algorithm}

\stitle{Node-insertion Algorithm}. Algorithm \ref{algo:nodeinsertion} updates the truss-index with inserting node $v$ and its incident edges $E(v)$ to $G$. Lower and upper bounds of each edge can be computed by calling Algorithm \ref{algo:nodebound} (line 2). According to the scope of affected edges, first, trussness of newly added edges is set to $\low(e)$, and then candidate edges for updating are found through lines 3-10.
Newly added edges with $\low(e) + 1 = \up(e)$ might have trussness increase which are found in lines 3-6. Furthermore, any edge in \soroush{$G[N(v)]$} that might get affected is found through lines 7-9.
According to Rule 1', $k_{max}$ is set to maximum of the upper bounds which results in pruning all unaffected edges which have trussness of at least $k_{max}$. The procedure of level-by-level updating truss-index (line 12) follows the edge-insertion algorithm \cite{huang2014querying}.

\stitle{Computing $\up(e)$ and $\low(e)$}. Algorithm \ref{algo:nodebound} computes the upper bound $\up(e)$ and lower bound $\low(e)$ in Lemma \ref{lem.bound}. Computing lower bounds and upper bounds are almost the same, and the same code can be used with passing a parameter $type \in \{\text{low}, \text{up}\}$ to indicate which bound to compute. Consider the case where $type$ is low and lower bound is required. Algorithm starts with an induced subgraph $G[N(v)]$ as $H$, which is $G^{2, 0}_v$. In each iteration refines $H$ to reach $G^{k, k - 2}_v$ from $G^{k - 1, k - 3}_v$. Assume the $(k-1)$-th step has correctly stored $G^{k - 1, k - 3}_v$ in $H$. Algorithm first removes edges with trussness less than $k$ (lines 3-4), and then removes vertices with degree less than $k - 2$ (lines 5-7). All removals have been necessary and obtained $H$ is $G^{k, k - 2}_v$. Each node removed in this iteration is member of $G^{k - 1, k - 3}_v$ but not $G^{k, k - 2}_v$; therefore the bound $k-1$ finally found by the algorithm is the maximum possible value.

\subsection{A Classification-based Hybrid Algorithm for Finding K-Truss in Public-Private Graphs}
\label{sec.ppquerying}

This section introduces a classification-based algorithm of updating truss-index from public graph $G$ to personalized pp-graph $G \cup G_u$. The algorithm is outline\soroush{d} in Algorithm \ref{algo:hybridpp}, which uses a hybrid strategy of updating with node/edge insertions/deletions. Specifically, we have two following strategies to update truss-index.

\squishlisttight
\vspace{-0.1cm}
\item \stitle{Edge-PP.} Add private edges of $E(G_u)$ one by one into $G$ using edge-insertion algorithm \cite{huang2014querying}.

\item \stitle{Vertex-PP.} Remove vertex $u$ with its public edges by node-deletion algorithm, then add back $u$ with all incident edges
\soroush{to obtain} $G\cup G_u$ using node-insertion algorithm in Algorithm \ref{algo:nodeinsertion}.
\end{list}

\begin{figure}[h]
	\centering 
	\vspace{-0.3cm}
	\includegraphics[width=0.9\columnwidth]{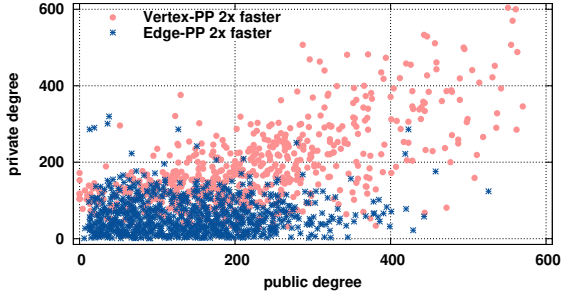}
	\vspace{-0.3cm}
	\caption{Win case distribution of \vpp and \epp for 1367 random query nodes in PP-DBLP-2013.}
	\label{fig:win-case-comparison}
	\vspace{-0.2cm}
\end{figure}
\begin{algorithm}[t]
\small
\vspace{-0.05cm}
\caption{Hybrid-PP Algorithm}
\label{algo:hybridpp}
\begin{flushleft}
\vspace*{-0.1cm}
\textbf{Pre-process:}\\
\textbf{Input:} $G=(V, E)$, truss-index $\{\tau(e) \colon \forall e \in E\}$\\
\textbf{Output:} Classification model $C: D \rightarrow \{C_V, C_E\}$ \\
\end{flushleft}

\begin{algorithmic}[1]
\vspace*{-0.1cm}
\STATE Training vertex set $S \leftarrow sample\_nodes(G)$
\STATE \textbf{for} $v$ in $S$ \textbf{do}
\STATE \hspace{0.2cm} $T_{V}(v), T_{E}(v) \leftarrow$ Runtime of \vpp and \epp on $v$
\STATE $\textbf{X}=[feature(v) : {v \in S}]$
\STATE $\textbf{Y}=[C_V $ \textbf{if} $T_{V}(v) < T_{E}(v)$ \textbf{else} $C_E : v \in S]$
\STATE $C \leftarrow$Classifier-Construction$(\textbf{X}, \textbf{Y})$
\end{algorithmic}

\begin{flushleft} 
\vspace*{-0.1cm}
\textbf{Query:}\\
\textbf{Input:} query node $u$, private graph $G_u$, and integer $k$\\
\textbf{Output:} the $k$-truss in $G \cup G_u$\\
\end{flushleft}

\begin{algorithmic}[1]
\vspace*{-0.1cm}
\STATE \textbf{if} $C.predict(feature(u)) = C_V$ \textbf{then}
\STATE \hspace{0.3cm} Update index using \vpp($u$)
\STATE \textbf{else} Update index using \epp($u$)
\STATE \textbf{return} Query\_KTruss($u$, $k$) on updated index
\end{algorithmic}
\end{algorithm}
\noindent

Note that \vpp can not directly add $u$'s private edges into $G$,
\soroush{as Lemma \ref{lem.bound} holds only}
for the insertion of a completely new vertex. Both algorithms update the truss-index correctly; however, the optimal choice between \epp and \vpp is \soroush{not} straightforwardly clear, because multiple aspects affect efficiency. Determining efficiency performance  requires a global knowledge of the whole graph structure. Local topological properties is not sufficient to decide which algorithm works faster. For example, Figure \ref{fig:win-case-comparison} shows the distribution of cases in which \vpp or \epp perform at least two times faster than the other algorithm; it consists of 1367 randomly selected nodes w.r.t their public and private degrees sampled from PP-DBLP-2013 dataset \cite{huang2018pp}. It is clear by Figure \ref{fig:win-case-comparison} that the simple distinction of degree is not sufficient to make good decision upon which algorithm to use. Thus, we formulate and tackle the problem of using  \epp and \vpp as a classification task.


Algorithm \ref{algo:hybridpp} predicts which updating method between \vpp and \epp works faster in terms of the given query, and runs that algorithm to answer the query.
There are two classes $C_V$ and $C_E$ that each node $u \in V$ is in class $C_V$ if \vpp works faster than \epp to update index from public graph $G$ to $G \cup G_u$ and the similar for $C_E$. We present each node $u$ using the following features: 1) public degree of $u$; 2) private degree of $u$; 3) the number of triangles containing $u$ respectively in public graph $G$, private graph $G_u$, and pp-graph $g_u$; 4) the trussness sum of public edges; 5) the maximum trussness of public edges; and 6) the summation and maximum over lower and upper bounds calculated by Algorithm \ref{algo:nodebound}. Let $D$ denote the vector space consisted of features described above, and $feature: V \rightarrow D$ the mapping function from vertices to feature space $D$.

\section{Experiments}\label{sec.exp}

\begin{figure*}[!t]
	\centering
	\begin{subfigure}[t]{0.24\textwidth}
		\centering
		\includegraphics[width=1.1\textwidth]{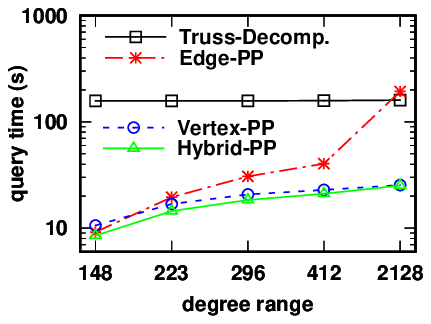}
		\vspace{-0.5cm} 
		\caption{PP-DBLP-2013}
		\label{fig:bydeg13}
	\end{subfigure}
	\begin{subfigure}[t]{0.24\textwidth}
		\centering
		\includegraphics[width=1.1\textwidth]{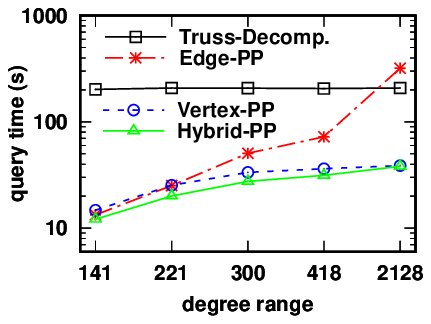}
		\vspace{-0.5cm} 
		\caption{PP-DBLP-2014}
		\label{fig:bydeg14}
	\end{subfigure}
	\begin{subfigure}[t]{0.24\textwidth}
		\centering
		\includegraphics[width=1.1\textwidth]{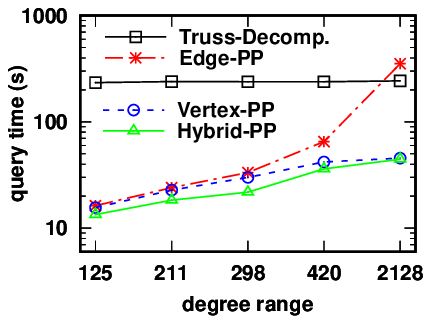}
		\vspace{-0.5cm} 
		\caption{PP-DBLP-2015}
		\label{fig:bydeg15}
	\end{subfigure}
	\begin{subfigure}[t]{0.24\textwidth}
		\centering
		\includegraphics[width=1.1\textwidth]{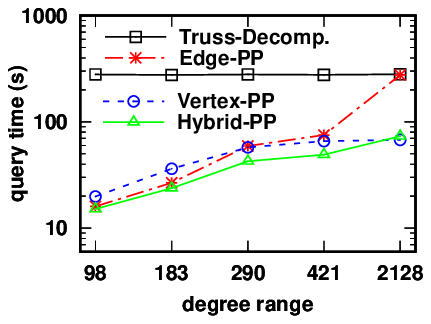}
		\vspace{-0.5cm} 
		\caption{PP-DBLP-2016}
		\label{fig:bydeg16}
	\end{subfigure}

	\vspace{-0.3cm}
	\caption{Average query time of different methods on PP-DBLP varied by query node degree.}\label{fig:dblp-deg}
	\vspace{-0.3cm}
\end{figure*}


\begin{figure*}[!t]
	\centering
	\begin{subfigure}[t]{0.24\textwidth}
		\centering
		\includegraphics[width=1.1\textwidth]{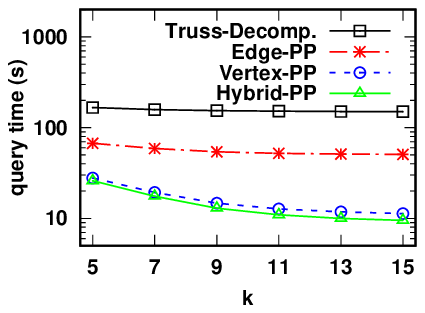}
		\vspace{-0.5cm} 
		\caption{PP-DBLP-2013}
		\label{fig:byk13}
	\end{subfigure}
	\begin{subfigure}[t]{0.24\textwidth}
		\centering
		\includegraphics[width=1.1\textwidth]{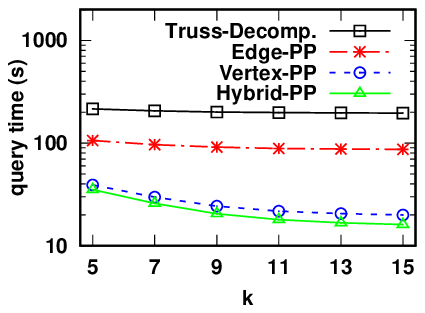}
		\vspace{-0.5cm} 
		\caption{PP-DBLP-2014}
		\label{fig:byk14}
	\end{subfigure}
	\begin{subfigure}[t]{0.24\textwidth}
		\centering
		\includegraphics[width=1.1\textwidth]{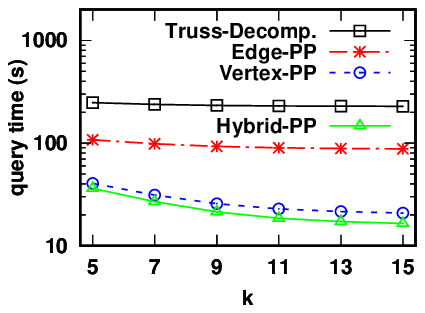}
		\vspace{-0.5cm} 
		\caption{PP-DBLP-2015}
		\label{fig:byk15}
	\end{subfigure}
	\begin{subfigure}[t]{0.24\textwidth}
		\centering
		\includegraphics[width=1.1\textwidth]{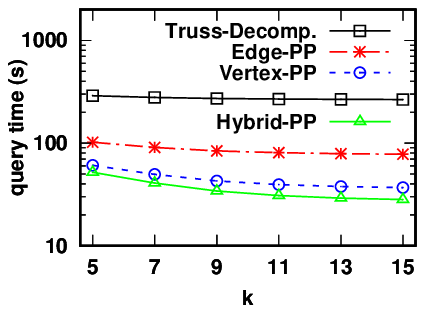}
		\vspace{-0.5cm} 
		\caption{PP-DBLP-2016}
		\label{fig:byk16}
	\end{subfigure}
	\vspace{-0.3cm}
	\caption{Average query time of different methods on PP-DBLP varied by parameter $k$.}\label{fig:dblp-k}
	\vspace{-0.4cm}
\end{figure*}


\stitle{Datasets:} 
We used four public-private graphs from real-world DBLP records called PP-DBLP \cite{huang2018pp}.\footnote{\url{https://github.com/samjjx/pp-data}} Published articles make the public network, and ongoing collaborations form the private networks which are only known by partial authors. Network statistics of PP-DBLP are given in Table~\ref{tab:ppdblp-dataset}.
We also use\soroush{d} 10 real-world graphs available from SNAP \cite{snapnets} shown in Table~\ref{tab:snap}. 

\begin{table}[t]
\scriptsize
\centering
\begin{tabular}{lrrrr}  
\toprule
Name & $|V|$ & $|E|$ & $|V_{private}|$ & $|E_{private}|$\\

\midrule
PP-DBLP-2013	& 1,791,688	 & 5,187,025 & 804,121 & 3,166,863  \\ 
PP-DBLP-2014	& 1,791,688	 & 5,893,083 & 669,138 & 2,491,847  \\
PP-DBLP-2015	& 1,791,688	 & 6,605,428 & 502,654 & 1,719,794  \\
PP-DBLP-2016	& 1,791,688	 & 7,378,090 & 257,129 & 719,204  \\
\bottomrule
\end{tabular}
\vspace{-0.3cm}
\caption{Network Statistics of Real-world Public-Private Graphs}
\vspace{-0.3cm}
\label{tab:ppdblp-dataset}
\end{table}

\begin{table}[t]
\scriptsize
\centering

\begin{tabular}{lrrrrrr}  
\toprule

\multirow{2}{*}{Name} & \multirow{2}{*}{$|V|$ }& \multirow{2}{*}{$|E|$} & \multirow{2}{*}{$|V_{t}|$} & \multicolumn{2}{c}{Avg. time per node (s)}    & \multirow{2}{*}{Speedup}\\ \cmidrule{5-6} 
& & & & Node-ins. & Edge-ins.\\
\midrule
DBLP & 334\textbf{K} & 925\textbf{K} & 940 & 0.05 & 0.15 & 3.20\\
AstroPh & 18\textbf{K} & 396\textbf{K} & 750 & 0.08 & 0.67 & 7.99\\
EmailEuAll & 265\textbf{K} & 420\textbf{K} & 292 & 0.38 & 4.20 & 10.96\\
Wikivote & 7\textbf{K}  & 103\textbf{K} & 534 & 0.73 & 28.53 & 39.10\\
EmailEnron & 36\textbf{K}  & 367\textbf{K} &  485 & 0.83 & 24.10 & 29.03\\
Gowalla & 196\textbf{K} &1.9\textbf{M} & 156 & 2.99 & 102.39 & 34.27\\
WikiTalk & 2.4\textbf{M} & 5\textbf{M} & 191 & 40.54 & 2017.26 & 49.76\\
Flickr & 80\textbf{K} & 11.8\textbf{M} & 697 & 96.30 & 2195.57 & 22.80\\
Digg & 771\textbf{K} & 7.3\textbf{M} & 471 & 142.82 & 5741.43 & 40.20\\
LiveJournal & 4\textbf{M}  & 34.7\textbf{M} & 377 & 142.63 & 5515.66 & 38.67\\
\bottomrule
\end{tabular}
\vspace{-0.3cm}
\caption{Comparison of Node-insertion and Edge-insertion methods in terms of efficiency, by inserting $V_{t}$ randomly selected nodes on real-world graphs. Here $\textbf{K}=10^3$ and $\textbf{M}=10^6$. 
}
\vspace{-0.3cm}
\label{tab:snap}
\end{table}

\stitle{Compared Methods and Evaluated Metrics:} To evaluate the efficiency of improved strategies proposed in this paper, we test\soroush{ed} and compare\soroush{d} four algorithms as follows.

\squishlisttight
\item \baseline:\eat{is} \soroush{an} online search approach using truss decomposition for computing $k$-truss index from scratch \cite{WangC12}.
\item \epp:\eat{is} an approach using the edge-insertion algorithm for updating truss index \cite{huang2014querying}.
\item \vpp:\eat{is} our approach using the node-deletion and node-insertion for updating truss-index in Algorithm \ref{algo:nodeinsertion}.
\item \hybridpp:\eat{is} our hybrid approach using both \epp and \vpp for updating truss-index in Algorithm \ref{algo:hybridpp}.
\end{list} 

After updating the index, all of the four methods mentioned above use the same query method. We compare them by reporting the running time in seconds. The less the running time is, the better the efficiency performance is.
We set the parameter $k=7$ by default. We also evaluate the methods by varying parameters $k$ in $\{5, 7, 9, 11, 13, 15\}$.

\subsection{Efficiency Evaluation on SNAP Networks}
\label{sec.nodeeval}

To evaluate the efficiency of the Node-insertion algorithm, we conducted experiments on 10 SNAP graph datasets in Table \ref{tab:snap}. Due to no available private information in these networks, we randomly generated private edges as follows. We divided nodes into 40 bins by their degree and took 50 randomly selected nodes from each bin. Bin set was defined as $\{B_1, B_2, \ldots B_{40}\}$ where $B_i = \{v : {i - 1 \over 40} < {d(v) \over \Delta} \le {i \over 40} \}$ with $\Delta = \max_{v \in V}\{d(v)\}$. Some bins had less than 50 nodes, and in this case, all nodes of that bin were selected. Let $G = (V, E)$ denote the initial graph, and $V_{t}$ the set of sampled vertices. All edges with at least one end in $V_{t}$ was considered to be private. We ran \baseline on the induced subgraph of vertex set $V' = V \setminus V_{t}$, then added each node $v \in V_{t}$ using Algorithm \ref{algo:nodeinsertion}, and compared the running time with adding edges one by one using edge-insertion algorithm. Experiment results and running time of both algorithms are available in Table \ref{tab:snap}. Obviously, our node-insertion algorithm significantly outperformed the edge-insertion method which achieved 38.67 times speedup on LiveJournal.

\subsection{Classification Evaluation}
In order to choose the proper classification method to incorporate into \hybridpp, we test\soroush{ed} and compare\soroush{d five} classifiers\eat{,} in terms of classification accuracy and training time\soroush{,} as shown in Table \ref{tab:ppdblp-classifier}. Due to the low training time, high classification accuracy, and fast query time, Random Forest \eat{is}\soroush{was} finally used as the classifier of \hybridpp.

\begin{table}[h]
  \scriptsize
  \vspace{-0.2cm}
  \centering
  \begin{tabular}{lrrrrr}  
    \toprule
    
    \multirow{2}{*}{Classifier} & \multicolumn{4}{c}{Accuracy on PP-DBLP} & \multirow{2}{*}{Training}\\ \cmidrule{2-5} 
    & 2013 & 2014 & 2015 & 2016 & time (s)\\
    
    \midrule
    Random Forest & \textbf{84.4\%} & \textbf{87.0\%} & \textbf{86.5\%} & \textbf{87.8\%} & $<$1 \\
    Decision Tree & 84.3\% & 84.6\% & 84.7\% & 87.4\% & \soroush{$<$1} \\
    SVM       & 84.1\% & 84.5\% & 85.5\% & 85.2\% & 1140 \\
    k-NN      & 76.0\% & 79.6\% & 78.2\% & 72.7\% & $<$1 \\
    Degree Baseline & 69.5\% & 69.2\% & 71.8\% & 67.2\% & $<$1 \\
    \bottomrule
  \end{tabular}
  \vspace{-0.3cm}
  \caption{Accuracy of different classifiers on PP-DBLP 2013-2016.}
  \vspace{-0.3cm}
  \label{tab:ppdblp-classifier}
\end{table}

\subsection{Efficiency Evaluation on PP-DBLP Networks}
In this section, we conducted experiments on real-world public-private datasets of PP-DBLP. We compare the efficiency of four different methods \baseline, \epp, \vpp, and \hybridpp. \hybridpp adopted a Random Forest with 51 estimators and a maximum depth of 11 to construct a classifier.
\soroush{The \hybridpp used to answer queries on each PP-DBLP dataset was trained based on runtimes of \vpp and \epp on sampled nodes from the other three datasets.}
\soroush{We first divided all nodes into $100 \times 100$ bins by their public and private degrees, and then randomly took four nodes from each bin.}
Bin set was $\{B_{i, j} : 1 \le i, j \le 100  \}$ each bin defined as $B_{i, j} = \{ v : {i - 1 \over 100} < {d(v) \over \Delta} \le {i \over 100},  {j - 1 \over 100} < {d_p(v) \over \Delta_p} \le {j \over 100} \}$ where $\Delta$ is the maximum public degree, $\Delta_p$ is the maximum private degree, and $d_p(v)$ is the private degree of node $v$. In total $1836 \pm 58$ nodes were selected on each PP-DBLP datasets. 

\stitle{Vary Node Degree.}
We fixed the parameter $k = 7$ and ran the four proposed algorithms on sampled nodes. For better visualization and comparison, sampled nodes were divided into five equally sized groups by node degrees in their pp-graphs,
each group taking $20\%$ of sampled nodes, and the average query time of algorithms on each group is reported in Figure \ref{fig:dblp-deg}. As we can see, the performance of \epp is better than \vpp in lower degree nodes, and the \hybridpp is the fastest as it takes the quicker algorithm between \epp and \vpp in most cases. On higher degree nodes, \epp takes much longer than \vpp, becomes less useful for \hybridpp; hence the gap between \hybridpp and \vpp decreases as \vpp becomes the optimal choice for the classifier on higher degree nodes. 

\stitle{Vary Parameter $k$.} The average query time varied by $k$ is reported in Figure \ref{fig:dblp-k}. \hybridpp and \vpp perform much better than \epp and \baseline for all $k$ values and datasets. \hybridpp is the fastest due to using \epp in cases where \vpp works worse than \epp.


\section{Conclusions}\label{sec.con}
This paper studies the problem of finding k-truss on public-private graphs. We develop a novel hybrid algorithm of k-truss updating with node/edge insertions/deletions, which can incrementally compute k-truss on public-private networks. 
This work opens up several interesting problems. First, developing further efficient and clever algorithms for finding $k$-truss in pp-graphs is important, instead of deleting and re-inserting nodes as \vpp. Second, finding other kinds of dense subgraphs on public-private graph is also wide open.


\bibliographystyle{named}
\bibliography{strucdiv,pp-dblp}

\end{document}